\title{\LARGE \bf
Active  Perception with Initial-State Uncertainty: A Policy Gradient Method
}
\author{Chongyang Shi$^{1}$, Shuo Han$^{2}$, Michael Dorothy$^{3}$, and Jie Fu$^{1}$
\thanks{$^{1}$Chongyang Shi and Jie Fu are with the Department of Electrical and Computer Engineering, University of Florida, Gainesville, FL, USA.
        {\tt\small \{c.shi, fujie\}@ufl.edu }}%
\thanks{$^{2}$Shuo Han is with the Department of Electrical and Computer Engineering, University of Illinois Chicago, Chicago, IL, USA.
        {\tt\small hanshuo@uic.edu}}%
\thanks{$^{3}$Michael Dorothy is with DEVCOM Army Research Laboratory, Adelphi, MD, USA.
        {\tt\small michael.r.dorothy.civ@army.mil}}%
        \thanks{Research was sponsored by the Army Research Laboratory  under  W911NF-22-2-0233 and in part by the Army Research Office under W911NF-22-1-0166.}
}
  \newcommand{\nat}{\mathbf{N}}
  \newcommand{\ie}{\textit{i.e.}}
  \newcommand{\eg}{\textit{e.g.}}
\newcommand{\dist}{\mathcal{D}}
\newcommand{\reals}{\mathbb{R}}
\newcommand{\probs}{\mathbb{P}}
\newtheorem{theorem}{Theorem} 
\newtheorem{definition}{Definition}
\newtheorem{proposition}{Proposition}
\newtheorem{problem}{Problem}
\newtheorem{remark}{Remark}
\newtheorem{assumption}{Assumption}
\acrodef{mdp}[MDP]{Markov decision process} 
 \DeclareMathOperator*{\optmin}{\mathrm{minimize}}
  \newcommand{\norm}[1]{\lVert #1 \rVert} 
\newcommand{\Expect}{\mathbb{E}}
\newcommand{\calS}{\mathcal{S}}
\newcommand{\calA}{\mathcal{A}}
\newcommand{\calO}{\mathcal{O}}
\newcommand{\calY}{\mathcal{Y}}
\acrodef{hmm}[HMM]{hidden Markov model}
\acrodef{fsc}[FSC]{finite state controller}
\acrodef{pomdp}[POMDP]{partially observable Markov decision process} 
\newcommand{\matO}{{\mathbf{O}}}
\newcommand{\matT}{{\mathbf{T}}}
\newcommand{\matA}{{\mathbf{A}}}
\newcommand{\deq}{\triangleq}
\definecolor{darkgreen}{rgb}{0,0.5,0}
\begin{document}

\maketitle

\begin{abstract}

This paper studies the synthesis of an active perception policy that maximizes the information leakage of the initial state in a stochastic system modeled as a hidden Markov model (HMM). Specifically, the emission function of the HMM is controllable with a set of perception or sensor query actions. Given the goal is to infer the initial state from partial observations in the HMM, we use Shannon conditional entropy as the planning objective and develop a novel policy gradient method with convergence guarantees. By leveraging a variant of observable operators in HMMs, we prove several important properties of the gradient of the conditional entropy with respect to the policy parameters, which allow efficient computation of the policy gradient and stable and fast convergence. We demonstrate the effectiveness of our solution by applying it to an inference problem in a stochastic grid world environment.
 
\end{abstract}

%
\section{INTRODUCTION}


   This paper studies the synthesis of an active perception strategy that maximizes the transparency of an initial state in a stochastic system given partial observations.  We introduce a set of active perception actions into the system modeled as a \ac{hmm} such that the emission at a given state is jointly determined by the state and a perception action. The goal is to compute an active perception strategy that maximizes the information leakage about the initial state $S_0$ given observations $Y$, which is measured by the negative conditional entropy $H(S_0|Y)$. 

 The contributions of this work are summarized as follows: First, we prove that active perception planning to minimize the conditional entropy $H(S_0|Y)$ cannot be reduced to a partially observable Markov decision process (POMDP) with a belief-based reward function. Leveraging a variant of observable operators \cite{jaegerObservableOperatorProcesses1997},  we develop an efficient algorithm to compute the posterior distribution $\probs(S_0|Y)$ given a perception policy. Additionally, it is shown that the policy gradient of conditional entropy depends only on the posterior distribution $\probs(S_0|Y)$ and the gradient of the policy with respect to its parameters. We prove that the entropy is Lipschitz continuous and Lipschitz smooth in the policy parameters under some assumptions for policy search space and thus ensure the convergence of the gradient-based planning. Finally, we evaluate the performance in a stochastic grid world environment.



\noindent \textbf{Related Work} 
Active perception \cite{bajcsyRevisitingActivePerception2018} is to gather information selectively for improving the task performance of an autonomous system. The applications of active perception range from object localization \cite{andreopoulos2009theory}, target tracking \cite{CASAO2024103876,zhou2011multirobot} to mission planning for surveillance and monitoring \cite{lauriMultiagentActivePerception2020,6161683}. 
Information-theoretic metrics are introduced as planning objectives in various active perception problems. 
 One approach~\cite{egorovTargetSurveillanceAdversarial2016,Araya2010Neurips} to active perception is to formulate a \ac{pomdp} with a reward function that depends on the information state or the belief of an agent. For example, in target surveillance, a patrolling team is rewarded by uncertainty reduction of the belief about the intruder's position/state. In intent inference, the authors \cite{Shendeep2019} used the negative entropy of the belief over the opponent's intent as a reward and maximized the total reward for active perception.

Besides active perception, several studies have explored decision-making with information-theoretic objectives.
In \cite{molloy2023smoother}, the authors introduce a method for obfuscating/estimating state trajectories in \ac{pomdp}s and show that the causal conditional entropy of the state trajectory given observations and controls can be reformulated as a cumulative sum, allowing the use of standard \ac{pomdp} solvers. The work \cite{shi2024information} develops a method to 
maximize the conditional entropy of a secret variable in an MDP to a passive observer. Another work \cite{SavasEntropy2022} proposes entropy maximization in POMDPs to minimize the predictability of an
agent’s trajectories to an outside observer. In both cases, the planning agent controls the stochastic dynamics and the observer is passive. In comparison, this paper studies the dual problem when the control system is autonomous but the observer is active. Thus, the observer's policy is restricted to be observation-based. Unlike \cite{SavasEntropy2022} where the goal is to maximize the sum of the conditional entropy of current states given the historical states, our goal is to maximize the information about some past state (initial state) given the observations received by the observing agent. This problem has important applications in intent recognition and system diagnosis \cite{lafortuneHistoryDiagnosabilityOpacity2018a}. 

\section{PRELIMINARIES AND PROBLEM FORMULATION}
\label{sec:prelim}
\noindent \textbf{Notation} The set of real numbers is denoted by $\reals$. Random variables will be denoted by capital letters, and their realizations by lowercase letters (\eg, $X$ and $x$).  A sequence of random variables and their realizations with length $T$ are denoted as $X_{0:T}$ and $x_{0:T}$. The notation $x_i$ refers to the $i$-th component of a vector $x \in \reals^{n}$ or to the $i$-th element of a sequence $x_0, x_1, \ldots$, which will be clarified by the context. 
Given a finite set $\mathcal{S}$, let $\dist(\mathcal{S})$ be the set of all probability distributions over $\mathcal{S}$. The set $\mathcal{S}^{T}$ denotes the set of sequences with length $T$ composed of elements from $\mathcal{S}$, and $\mathcal{S}^\ast$ denotes the set of all finite sequences generated from $\mathcal{S}$.   The empty string in $\mathcal{S}^\ast$ is denoted by $\lambda$.

We introduce a class of active perception problems where the agent cannot control the dynamical system but can select perception actions to monitor it in order to infer some unknown state variables. The class of perception actions can be the choices of sensors to query in a distributed sensor network or the choices of poses for cameras with a limited field of view (FoV).
\begin{definition}
An \ac{hmm} with a controllable emission function is a tuple $
\mathcal{M} = \langle \calS , \calO, \calA, P, E, S_0 \rangle$
where 
\begin{inparaenum}
    \item $\calS=\{1,\ldots, N\}$ is a finite  state space. 
    \item $\calO$ is a finite set of observations.
    \item $\calA$ is a finite set of \emph{perception actions}. 
    \item $P:\calS\rightarrow \dist(\calS)$ is the probabilistic transition function.
    \item $E: \calS\times \calA \rightarrow \dist(\calO)$ is the emission function (observation function) that takes a state $s$ and a \emph{perception action} $a$, outputs a distribution over observations.  
    \item $S_0$ is a random variable representing the initial state. The distribution of $S_0$ is denoted by $\mu_0$. And $\mathcal{S}_0$ denotes the set of initial states.
\end{inparaenum}
\end{definition} 
 
\begin{definition} 
\label{def:perception_policy_general}
A \emph{non-stationary, observation-based} perception policy is a function $\pi: \calO^\ast \to \dist(\calA)$ that maps a history $o_{0:t}$ of observations to a distribution $\pi(o_{0:t})$ over perception actions.
\end{definition}
\begin{definition}
\label{def:perception_policy}
    A \emph{finite-state, observation-based} perception policy $\pi$ with deterministic transitions is a tuple $\pi\coloneqq \langle  Q, \calO, \calA, \delta, \psi, q_0 \rangle $ where \begin{inparaenum}
\item $Q$ is a set of memory states.
\item $\calO, \calA$ are a set of inputs and a set of outputs, respectively.
\item $\delta: Q\times \calO \rightarrow Q$ is a deterministic transition function that maps a state and an input $(q,o)$ to a next state $\delta(q,o)$. 
\item $\psi:Q\rightarrow \dist(\calA) $ is a probabilistic output function.
\item $q_0$ is the initial state.
\end{inparaenum}
\end{definition}
 Given a sequence $o_{0:t}$ of observations, the non-stationary policy directly outputs a distribution   $\pi(o_{0:t})$. A finite-state policy first computes the reached memory state $q= \delta(q_0, o_{0:t})$ and then outputs a distribution $\psi(q)$ over actions. In both cases, we can treat $\pi$ as a function that maps $o_{0:t}$ to a distribution over actions and write $\pi(a| o_{0:t})$ as the probability of taking action $a$ given observations $o_{0:t}$. 
When there is no observation, \ie, before the initial observation is made, the agent selects a perception action according to $\pi(\lambda)$. If $\pi$ is a finite-state policy, then we can further obtain $\pi(\lambda) = \pi(\psi(q_0))$ because $\delta(q_0,\lambda)= q_0$.

For a given \ac{hmm} $\mathcal{M}$, a perception policy $\pi$ induces a discrete stochastic process $\{S_t, A_t,  O_t, t\in \nat\}$, where $S_t \in \mathcal{S}$ and $O_t \in \mathcal{O}$ are the underlying hidden state and observation at the $t$-th time step, and   
\begin{multline*}
    \probs^\pi(O_t = o | O_{0:t-1}= o_{0:t-1}, S_t=s_t) =\\  \sum_{a\in \calA} E(o| s_t, a )\pi(a| o_{0:t-1}), \forall t>0.
\end{multline*}
when the perception policy $\pi$ is understood from the context, we write $\probs$ instead of $\probs^\pi$ for clarity.

The conditional entropy of $X_2$ given $X_1$ is defined by
\[
H(X_2|X_1) =   -\sum_{x_1\in \mathcal{X}}\sum_{x_2\in \mathcal{X}} p(x_1,x_2) \log p(x_2|x_1).
\]
The conditional entropy measures the uncertainty about $X_2$ given knowledge of $X_1$.  
A lower conditional entropy makes it easier to learn $X_2$ from observing a sample of $X_1$. 

For any finite horizon $T$, the agent's partial observation about a path in the \ac{hmm} includes a sequence $o_{0:T}$ of observations and a sequence $a_{0:T}$ of perception actions. We denote the agent's information by $y_{0:T}=(o_{0:T}, a_{0:T})$. When the length of $y_{0:T}$ is clear from the context, we omit the subscript and use $y$ to denote the sequence. In the following, we refer to $y$ as an observation sequence with the understanding that it is the joint observation and perception action sequence.

\begin{problem} 
\label{problem:initial-state}
    Let an \ac{hmm} $\mathcal{M}$ and a finite horizon $T$ be given. Let $\Pi$ be a policy space.
    Compute an active perception policy $\pi \in \Pi$ that minimizes the conditional entropy of the initial state $S_0$ given the partial information $Y_{0:T}= (O_{0:T}, A_{0:T})$ induced by $\pi$. That is,
    \begin{multline*}
 \optmin_{\pi \in \Pi} \quad       H(S_0  |Y_{0:T}; {M_\pi}) \triangleq \\
 - \sum_{s_0 \in \mathcal{S}_0, y_{0:T} \in \calO^T\times \calA^T}  \probs^\pi (s_0, y_{0:T} )\log \probs^\pi (s_0\mid y_{0:T}),
    \end{multline*}   
where $\probs^\pi (s_0, y_{0:T} )$  is the joint probability of starting from $s_0$ and observing $y_{0:T}$ under the policy $\pi$ and $\probs^\pi (s_0\mid y_{0:T})$ is the conditional probability of starting from $s_0$ given observation $y_{0:T}$.
\end{problem}

\begin{remark}
A constrained formulation to minimize entropy given a bounded perception cost can also be formulated. Because policy gradient methods \cite{agarwal2021theory} for optimizing a cumulative reward/cost are well-understood, we focus on solving this entropy minimization problem and expect that only a small modification to the gradient computation is needed to minimize a weighted sum of the conditional entropy and the expected total cost of perception actions. 
\end{remark}

 
\section{MAIN RESULT}
 First, we show that the problem cannot be reduced to a $\rho$-POMDP \cite{Araya2010Neurips}  which is a POMDP with a belief-based reward. Given observations $y_{0:t} = (o_{0:t}, a_{0:t})$ and a perception policy $\pi$, the belief about $S_0$ is the posterior distribution $b_t(s) = \probs^{\pi}(S_0=s_0|o_{0:t}, a_{0:t} )$.  
\begin{proposition}
\label{prop:no-belief-reward}
 There is no belief-based reward $R:\Delta(S)\rightarrow \reals$ such that $\sum_{i=0}^t R(b_i) = - H(S_0|y_{0:t}) $.
\end{proposition}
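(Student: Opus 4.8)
I would argue by contradiction, exploiting the tension between a \emph{stage}-additive reward — which, evaluated along the belief trajectory, must be a function of the current belief only — and the way $-H(S_0\mid y_{0:t})$ changes step to step, which depends on the whole history through the previous belief. Throughout, read $H(S_0\mid y_{0:t})$ as the Shannon entropy $-\sum_{s}b_t(s)\log b_t(s)$ of the posterior $b_t$. Suppose some $R:\dist(\calS)\to\reals$ satisfied $\sum_{i=0}^{t}R(b_i)=-H(S_0\mid y_{0:t})$ for every horizon $t$ and every observation sequence $y_{0:t}$. Subtracting the identity at $t-1$ from the one at $t$ gives the incremental form
\[
R(b_t)=H(b_{t-1})-H(b_t),\qquad t\ge 1,
\]
and the $t=0$ case gives $R(b_0)=-H(b_0)$, where $b_{t-1}$ is the posterior after the prefix $y_{0:t-1}$. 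Since the left side depends on $b_t$ alone, consistency would force $H(b_{t-1})$ to be a function of $b_t$ along every reachable history — and the remaining work is to show this fails.

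For that it suffices to produce one HMM instance. I would first take the transition kernel to be the identity (each state self-loops with probability one), so $S_t\equiv S_0$ and the posterior reduces to the transparent product update $b_t(s_0)\propto\mu_0(s_0)\prod_{i=0}^{t}E(o_i\mid s_0,a_i)$; this turns $\mathcal M$ into a sequential hypothesis test over $S_0$ and removes any dependence of the likelihoods on intermediate states. Then I would choose $\mu_0$ and $E$ so that one non-degenerate belief $b^\ast$ ($H(b^\ast)>0$) is reached in two incompatible ways: directly at time $0$, forcing $R(b^\ast)=-H(b^\ast)$; and at time $1$ as the update of a non-degenerate belief $b_0$ with $H(b_0)>0$, forcing $R(b^\ast)=H(b_0)-H(b^\ast)$. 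Equating the two yields $H(b_0)=0$, a contradiction. A concrete realization: $\calS=\{1,2\}$, $\mu_0=(\tfrac12,\tfrac12)$, $\calO=\{o_u,o_1,o_2\}$, a single perception action, $E(\cdot\mid 1)=(\tfrac12,\tfrac12,0)$ and $E(\cdot\mid 2)=(\tfrac12,\tfrac16,\tfrac13)$; then $b^\ast=(\tfrac34,\tfrac14)$ is the posterior after $o_1$ at time $0$ and also the posterior after the history $o_u,o_1$ (whose time-$1$ predecessor belief is the uniform $b_0$ with $H(b_0)=\log 2$), so the two expressions for $R(b^\ast)$ give $\log 2=0$.

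The main obstacle is not the algebra but checking that the counterexample is a bona fide HMM with controllable emission: that both histories have positive probability under $\probs^\pi$ for the (here trivial) policy and that the posterior recursion really is the product form used. The identity-transition choice is what makes this routine; without it one must track $\probs(o_{t+1}\mid S_0=s_0,y_{0:t},a_{t+1})$, which couples to the law of the intermediate states and clutters the argument. I would also add one sentence pinning down what ``belief-based reward'' means here — a single stage reward $\rho=R$ accumulated along the belief trajectory, as in a $\rho$-POMDP — so the statement is not misread as excluding a \emph{terminal} reward $-H(b_T)$, which trivially reproduces (in expectation) the objective of Problem~\ref{problem:initial-state}.
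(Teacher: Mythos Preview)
Your argument is correct and follows the same contradiction scheme as the paper: subtract consecutive instances of the identity to get the incremental form $R(b_t)=H(b_{t-1})-H(b_t)$, then exhibit one belief that is forced to carry two distinct reward values. The difference is in the counterexample. The paper picks an HMM in which the very first observation identifies $S_0$ with certainty, so the degenerate belief $b_1\in\{[1,0],[0,1]\}$ persists as $b_2=b_1$; then $R(b_1)=H(b_0)-0=1$ while $R(b_2)=0-0=0$, a contradiction obtained inside a single trajectory and with no posterior arithmetic at all. You instead play the base case $R(b_0)=-H(b_0)$ against the incremental case $R(b_1)=H(b_0)-H(b_1)$ across two different histories, which is equally valid but costs you the design of an uninformative observation $o_u$ and a hand check of a nontrivial Bayes update. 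Both routes land the same punch; the paper's is shorter because collapsing to a point mass makes the two required values ($1$ and $0$) immediate. Your closing remark clarifying that the proposition targets stage-additive (not terminal) rewards is a useful addition the paper leaves implicit.
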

\begin{proof}
 Suppose, by way of contradiction, such a belief-based reward function exists. Then, $R(b_{t }) = \sum_{i=0}^t R(b_i) - \sum_{i=0}^{t-1} R(b_i) =- H(S_0|y_{0:t}) +   H(S_0|y_{0:t-1}) $. We show that it is possible to reach the same belief with two different observations $y_{0:t}$ and $y'_{0:t'}$, but $- H(S_0|y_{0:t} ) +   H(S_0|y_{0:t-1}) \ne - H(S_0|y'_{0:t'} ) +   H(S_0|y'_{0:t'-1})$. Consider an example of HMM where the initial state can be either $0$ or $1$ with a prior distribution $[0.5, 0.5]$; namely, the initial belief is given by $b_0 = [0.5, 0.5]$. If $S_0=0$, the next state $S_1$ will always emit an observation $0$. If $S_0=1$, the next state $S_1$ will always emit an observation $1$. Thus, if $0$ is observed, then $b_1= [1,0]$; if $1$ is observed, then $b_1= [0,1]$. The reward $R(b_1) = - H(S_0|o) +H(S_0|\lambda) = 1$ regardless whether $o=0$ or $o=1$ is observed. After reaching state $S_1$, the system reaches $S_2$ next and yields some observation $o'$. The belief does not change, \ie, $b_2=b_1$ because the initial state is known with certainty. However, based on the formula $R(b_2) =   - H(S_0|o' o) +H(S_0|o ) = -0+0=0$, which is different from $R(b_1)$. A contradiction is established.
\end{proof}




In the next, we develop a policy gradient method to solve Problem~\ref{problem:initial-state}. Consider a class of parametrized (stochastic) policies $\{\pi_\theta\mid \theta\in \Theta\}$. We denote by $M_\theta$ the stochastic process $\{S_t,A_t,O_t, t\in \nat\}$ induced by a policy $\pi_\theta$, and $\probs_\theta(\cdot)$ the corresponding probability measure.


For a given parameterized policy $\pi_\theta$, we denote by $\calY_\theta = \{y\in O^T\times A^T| P_\theta (y) >0\}$ the set of possible observations under the perception policy $\pi_\theta$.  
The gradient of $H(S_0|Y; \theta) \deq H(S_0|Y; \pi_\theta)$ is given by
\begin{equation}
\begin{aligned}
\label{eq:HMM_gradient_entropy}
 &\nabla_\theta H(S_0|Y;\theta) \\
= & - \sum_{y \in \calY_\theta } \sum_{s_0 \in \mathcal{S}_0} \Big[\nabla_\theta \probs_\theta(s_0, y) \log \probs_\theta(s_0 | y) \\
&+  \probs_\theta(s_0, y) \nabla_\theta  \log \probs_\theta(s_0 | y)\Big] \\
= & - \sum_{y \in \calY_\theta} \sum_{s_0 \in \mathcal{S}_0} \Big[\probs_\theta(y) \nabla_\theta \probs_\theta(s_0| y) \log \probs_\theta(s_0 | y) + \\ 
& \probs_\theta(s_0| y) \nabla_\theta \probs_\theta(y) \log \probs_\theta(s_0 | y) 
+   \probs_\theta(y)\nabla_\theta \probs_\theta(s_0 | y)\Big]\\
 = & - \sum_{y \in \calY_\theta} \probs_\theta (y) \sum_{s_0 \in \mathcal{S}_0} \Big[ \log \probs_\theta(s_0 | y) \nabla_\theta \probs_\theta(s_0| y) \\
& + \probs_\theta(s_0| y) \log \probs_\theta(s_0 | y) \nabla_\theta \log \probs_\theta(y) + \nabla_\theta \probs_\theta(s_0 | y) \Big]. 
\end{aligned}
\end{equation}
In the following, Propositions~\ref{prop:grad_is_zero} and \ref{prop:two_grads_equal} will allow us to further simplify the computation of gradient.


To derive the results, we introduce the \emph{observable operator} augmented with perception actions. The observable operator \cite{jaeger2000observableoperator} has been proposed to represent a discrete \ac{hmm} and used to calculate the probability of an observation sequence in an \ac{hmm} using matrix multiplications.


Let the random variable of state, observation, and action, at time point $t$ be denoted as $X_t, O_t, A_t$, respectively. Let $\matT \in \reals^{N \times N}$ be the flipped state transition   matrix with $$\matT_{i,j} = \probs(X_{t+1} = i|X_t = j).$$ 
For each $a\in \calA$, Let $\matO^a \in \reals^{M \times N}$ be the observation probability matrix with $\matO_{o,j}^{a} =E( o |  j, a)$.
\begin{definition} 
Given the \ac{hmm} with controllable emissions $\mathcal{M}$, for any pair of observation and perception action $(o,a)$,
 the observable operator given perception actions $\matA_{o|a}$ is a matrix of size $N \times N$ with its $ij$-th entry defined as $$
 \matA_{o |a }[i,j] =  \matT_{i , j}\matO_{o,j}^{a} \ ,$$
 which is the probability of transitioning from state $j$ to state $i$ and at the state $j$,  an observation $o$ is emitted given   perception action $a$.
  In matrix form, 
\[
\matA_{o |a } = \matT \text{diag}(\matO_{o, 1}^{a }, \dots, \matO_{o , N}^{a }).
\]
\end{definition}



 
\begin{proposition}
\label{prop:observation-action-probability} The probability of an observation sequence $o_{0:t}$ given a sequence of perception actions $a_{0:t}$, can be written as matrix operations,
\begin{equation}
\label{eq:matrix_operation}
\probs(o_{0:t} | a_{0:t}) = \mathbf{1}_N^\top \matA_{o_t|a_t} \dots \matA_{o_0|a_0} \mu_0.
\end{equation}
In addition, 
for a fixed initial state $s_0\in \calS_0$,

\begin{equation}
\label{eq:matrix_operation_s0}
\probs(o_{0:t} | a_{0:t},s_0) = \mathbf{1}_N^\top \matA_{o_t|a_t} \dots \matA_{o_0|a_0} \mathbf{1}_{s_0}.
\end{equation}
where $\mathbf{1}_{s_0}$ is a one-hot vector which assigns 1 to the $s_0$-th entry.
\end{proposition}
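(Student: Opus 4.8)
The plan is to prove \eqref{eq:matrix_operation} by a straightforward induction on $t$, carrying a forward vector together with an invariant that identifies each partial matrix product with a joint probability. I would set
\[
\alpha_t \deq \matA_{o_t|a_t}\matA_{o_{t-1}|a_{t-1}}\cdots\matA_{o_0|a_0}\,\mu_0\ \in\ \reals^{N},
\]
and claim that for every $t\ge 0$ and every $i\in\calS$,
\[
\alpha_t[i] \;=\; \probs\big(X_{t+1}=i,\ O_{0:t}=o_{0:t}\ \big|\ A_{0:t}=a_{0:t}\big).
\]
The index shift here is the one point worth flagging: since $\matA_{o|a}[i,j]=\matT_{i,j}\matO^{a}_{o,j}$ bundles ``emit $o$ at state $j$ under action $a$, then transition to $i$,'' the $t$-th partial product already advances the state to time $t+1$ while recording exactly the emissions $o_0,\dots,o_t$.

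For the base case $t=0$ I would expand $\alpha_0=\matA_{o_0|a_0}\mu_0$ entrywise, $\alpha_0[i]=\sum_j \matT_{i,j}\matO^{a_0}_{o_0,j}\mu_0(j)=\sum_j P(i\mid j)E(o_0\mid j,a_0)\mu_0(j)$, which is $\probs(X_1=i,O_0=o_0\mid A_0=a_0)$ after marginalizing $X_0$. For the inductive step, from $\alpha_t=\matA_{o_t|a_t}\alpha_{t-1}$ we get $\alpha_t[i]=\sum_j P(i\mid j)\,E(o_t\mid j,a_t)\,\alpha_{t-1}[j]$; substituting the inductive hypothesis $\alpha_{t-1}[j]=\probs(X_t=j,o_{0:t-1}\mid a_{0:t-1})$ and using (i) that $O_t$ is conditionally independent of the past given $(X_t,A_t)$ — so $E(o_t\mid j,a_t)\,\alpha_{t-1}[j]=\probs(X_t=j,o_{0:t}\mid a_{0:t})$ — and (ii) the Markov property of $P$, the sum over $j$ collapses to $\probs(X_{t+1}=i,o_{0:t}\mid a_{0:t})$, closing the induction. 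Finally, applying $\mathbf{1}_N^\top$ marginalizes $X_{t+1}$ (equivalently, the ``extra'' trailing transition sums to one), giving $\mathbf{1}_N^\top\alpha_t=\sum_i \probs(X_{t+1}=i,o_{0:t}\mid a_{0:t})=\probs(o_{0:t}\mid a_{0:t})$, which is \eqref{eq:matrix_operation}. Equation \eqref{eq:matrix_operation_s0} follows from the identical argument after replacing $\mu_0$ by the one-hot vector $\mathbf{1}_{s_0}$, which encodes the degenerate prior that puts all mass on $s_0$, i.e.\ conditioning on $S_0=s_0$.

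The argument is routine, so there is no real obstacle; the one place to be careful is the meaning of conditioning on the action sequence. Because the perception policy makes $A_k$ a function of $O_{0:k-1}$, I would read $\probs(\,\cdot\mid A_{0:t}=a_{0:t})$ as evaluating the time-inhomogeneous HMM whose $k$-th emission matrix is $\matO^{a_k}$; this is consistent precisely because each $a_k$ depends only on the observations $o_{0:k-1}$ that are already being conditioned on, so no ``future'' information leaks into the past. With that convention fixed, the observable-operator identity is exactly the classical forward recursion, now indexed by the realized perception actions, and no continuity or optimization machinery is needed.
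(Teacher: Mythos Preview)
Your proposal is correct and follows essentially the same approach as the paper: both rely on the forward recursion for observable operators, with the paper simply citing Jaeger's forward-algorithm matrix identity while you spell out the induction on $t$ explicitly. Your careful treatment of the index shift in $\alpha_t$ and of what it means to condition on the action sequence are useful clarifications that the paper leaves implicit, but the underlying argument is identical.
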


\begin{proof}Since the transition does not depend on the perception actions, the following equation holds 
according to the matrix notation of the well-known ``forward algorithm" in \cite{jaeger2000observableoperator}, 
\begin{equation*}
\begin{aligned}
\probs(o_{0:t} | a_{0:t}) &= \mathbf{1}_N^\top \matT \text{diag}(O^{a_t}_{o_t, 1}, \dots, O^{a_t}_{o_t, N}) \cdot \dots \\
& \cdot \matT \text{diag}(O^{a_0}_{o_0, 1}, \dots, O^{a_0}_{o_0, N}) \mu_0.
\end{aligned}
\end{equation*}
\eqref{eq:matrix_operation_s0}
is derived by replacing the initial distribution $\mu_0$ with the one-hot distribution. 
\end{proof}
 
To compute the gradient $\nabla_\theta H(S_0|Y;\theta)$, we will need the value of $\nabla_\theta \log \probs_\theta(y)$. We start by calculating the probability of an observation sequence $y=(o_{0:t}, a_{0:t}) $ in $M_\theta$. 
\begin{proposition}
The probability of a sequence  $y=(o_{0:t}, a_{0:t}) $ of observations and perception actions in $M_\theta$ can be computed as follows:
\begin{equation}
\begin{aligned}
\label{eq:joint_final}
\probs_\theta(o_{0:t}, a_{0:t}) & =  \frac{\probs(o_{0:t} | a_{0:t}) }{\probs(o_{0} | a_{0}) }  \prod_{i=0}^{t} \pi_\theta(a_{i}| o_{0:i-1}). 
\end{aligned}
\end{equation}
where $o_{0:-1} \coloneqq \lambda$ is the initial empty  observation.
\end{proposition}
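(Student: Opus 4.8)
The plan is to prove \eqref{eq:joint_final} by induction on the horizon $t$, using two structural facts about the induced process $M_\theta$: the perception policy is observation-history based, and once the entire action sequence is fixed the remaining randomness (initial state, transitions, emissions) carries no dependence on $\theta$ and is exactly what the observable operators of Proposition~\ref{prop:observation-action-probability} track.

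First I would write down the one-step recursion implied by the definition of $M_\theta$: at time $i$ the agent draws $a_i\sim\pi_\theta(\cdot\mid o_{0:i-1})$ and then $O_i$ is emitted from the current hidden state under $a_i$, so the chain rule gives
\[
\probs_\theta(o_{0:t},a_{0:t}) = \probs_\theta(o_{0:t-1},a_{0:t-1})\,\pi_\theta(a_t\mid o_{0:t-1})\,\probs_\theta(o_t\mid o_{0:t-1},a_{0:t}).
\]
The crucial point is that $\probs_\theta(o_t\mid o_{0:t-1},a_{0:t})$ does not depend on $\theta$. Writing it as the Bayes ratio $\probs_\theta(o_{0:t},a_{0:t})/\probs_\theta(o_{0:t-1},a_{0:t})$ and expanding both joint probabilities by marginalizing over hidden state trajectories, every policy factor $\pi_\theta(a_j\mid o_{0:j-1})$ with $j\le t$ appears with identical arguments in numerator and denominator — these factors depend only on $o_{0:t-1}$ and $a_{0:t}$, all of which are being conditioned on — and hence cancels, leaving a ratio built only from $\mu_0$, $P$, and $E$. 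Therefore
\[
\probs_\theta(o_t\mid o_{0:t-1},a_{0:t}) = \probs(o_t\mid o_{0:t-1},a_{0:t}) = \frac{\probs(o_{0:t}\mid a_{0:t})}{\probs(o_{0:t-1}\mid a_{0:t-1})},
\]
where the last equality is the chain rule applied to the observable-operator product of Proposition~\ref{prop:observation-action-probability}, together with the fact that $\probs(o_{0:t-1}\mid a_{0:t}) = \probs(o_{0:t-1}\mid a_{0:t-1})$ (future actions cannot influence past observations, as is transparent from that product).

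Substituting the second display into the first, the factor $\probs(o_{0:t-1}\mid a_{0:t-1})$ cancels against the denominator supplied by the inductive hypothesis and one extra policy factor $\pi_\theta(a_t\mid o_{0:t-1})$ is appended, reproducing the claimed form at horizon $t$; so the inductive step is essentially forced once the two displays are established. The remaining work is the base case, which is where the normalization $\probs(o_0\mid a_0)$ in \eqref{eq:joint_final} is pinned down: at $t=0$ the initial state is distributed as $\mu_0$ independently of the first perception action $a_0\sim\pi_\theta(\lambda)$, so $\probs_\theta(o_0,a_0)$ factors through $\pi_\theta(a_0\mid\lambda)$ and the $\theta$-free first-observation term, and one checks this against the right-hand side of \eqref{eq:joint_final} evaluated at $t=0$.

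I expect the main obstacle to be exactly the $\theta$-independence step and its interaction with the boundary bookkeeping: one must argue carefully that it is conditioning on the \emph{entire} action prefix $a_{0:i}$ — not on a proper sub-prefix — that screens off $\theta$, and one must reconcile the one-step recursion with the normalization built into the ``forward-algorithm'' identity of Proposition~\ref{prop:observation-action-probability} so that the telescoping leaves precisely the single factor $\probs(o_0\mid a_0)$ and nothing more.
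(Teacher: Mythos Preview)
Your proposal is correct and follows essentially the same route as the paper: chain-rule factorization, isolation of the policy factor $\pi_\theta(a_i\mid o_{0:i-1})$, identification of the remaining conditional $\probs(o_i\mid o_{0:i-1},a_{0:i})$ as $\theta$-free and equal to the ratio $\probs(o_{0:i}\mid a_{0:i})/\probs(o_{0:i-1}\mid a_{0:i-1})$, and then telescoping. The only cosmetic difference is that you package the argument as an induction and justify $\theta$-independence via cancellation of policy factors in a Bayes ratio, whereas the paper writes out the full product at once and derives the same ratio by marginalizing over the hidden state $s_i$ and invoking the Markov/emission structure directly.
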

\begin{proof}
By the product rule of probability and causality, we can write the probability in the form,
 \begin{align}
& \probs_\theta(o_{0:t}, a_{0:t})  \nonumber \\
& =   \probs_\theta(o_{t}, a_{t} | o_{0:t-1}, a_{0:t-1}) \nonumber \\
& \qquad \cdot \probs_\theta(o_{t-1}, a_{t-1} | o_{0:t-2}, a_{0:t-2}) \cdot \ldots  \cdot \probs_\theta(o_{1}, a_{1} | o_0, a_0) \nonumber \\
& =  \prod_{i=1}^{t} \probs_\theta(o_i, a_i| o_{0:i-1},a_{0:i-1}).\label{eq:joint_distribution}
\end{align}
 For any $1 \le i \le t, \; i \in \mathbb{N}$, based on the multiplication rule of probability, we can decompose the conditional probability $\probs(o_{i}, a_{i} | o_{0:i-1}, a_{0:i-1})$ as
\begin{equation}
\label{eq:decompos}
\probs_\theta(o_{i}, a_{i} | o_{0:i-1}, a_{0:i-1}) = \probs (o_i | a_{0:i}, o_{0:i-1}) \pi_\theta(a_{i}| o_{0:i-1}). 
\end{equation}
Note that the probability of observing $o_i$ given the sequence of actions $a_{0:i}$ and past observations $o_{0:i-1}$ is independent from the policy $\pi_\theta$. 
And the conditional probability $\probs(o_i | a_{0:i}, o_{0:i-1})$ is derived as 
\begin{align}
      & \probs(o_i\mid a_{0:i},o_{0:i-1}) \nonumber \\
   = &  \sum_{s_i \in \calS} \probs(o_i|s_i,a_{0:i},o_{0:i-1}) 
 \probs(s_i|a_{0:i}, o_{0:i-1})  \nonumber \\
 \overset{(i)}{=}  &  \sum_{s_i\in \calS} E(o_i|s_i,a_i) 
 \probs(s_i|a_{0:i-1}, o_{0:i-1})  \nonumber \\
     = & \sum_{s_i\in \calS} E(o_i|s_i,a_i)\frac{\probs(s_i, o_{0:i-1}|a_{0:i-1})}{\probs(o_{0:i-1}|a_{0:i-1})} \nonumber \\
    = & \frac{1}{\probs(o_{0:i-1}|a_{0:i-1})}  \sum_{s_i\in \calS} E(o_i|s_i,a_i) \probs(s_i, o_{0:i-1}|a_{0:i-1}) \nonumber \\
    \overset{(ii)}{=} & \frac{\probs(o_{0:i}|a_{0:i})}{\probs(o_{0:i-1}|a_{0:i-1})}. \label{eq:prob-observation-i-cond} 
\end{align}
where (i) is because 1) the probability of observing $o_i$ is determined by state $s_i$ and perception action $a_i$ given the emission function $E(\cdot)$; and 2) the probability of reaching state $s_i$ at the $i$-th time step   does not depend on the perception action $a_i$ when the action sequence $a_{0:i}$ is fixed,
which can be calculated by equation~\eqref{eq:matrix_operation}. The equality $(ii)$ is established by the definition of observable operators and Proposition~\ref{prop:observation-action-probability}. 
Substituting \eqref{eq:prob-observation-i-cond} into \eqref{eq:decompos} and rewrite \eqref{eq:joint_distribution}, 
\begin{equation}
\begin{aligned}
     \probs_\theta(o_{0:t}, a_{0:t}) &= \prod_{i=1}^{t} \frac{\probs(o_{0:i} | a_{0:i}) }{\probs(o_{0:i-1} | a_{0:i-1}) } \pi_\theta(a_{i}| o_{0:i-1}) \\
     &=  \frac{\probs(o_{0:t} | a_{0:t}) }{\probs(o_{0} | a_{0}) }  \prod_{i=0}^{t} \pi_\theta(a_{i}| o_{0:i-1}).
\end{aligned}
\end{equation}
which can be computed efficiently using the matrix operators for calculating  $\probs(o_{0:t} | a_{0:t}) $.  
\end{proof}

For a fixed initial state $s_0 \in \calS_0$, the result of~\eqref{eq:joint_final}  becomes
\begin{equation}
\label{eq:joint_calcu}
\probs_\theta(y|s_0) = \frac{\probs(o_{0:t} | a_{0:t}, s_0) }{\probs(o_{0} | a_{0}, s_0) } \prod_{i=0}^{t} \pi_\theta(a_{i}| o_{0:i-1}),
\end{equation}
where the calculation of term $\probs(o_{0:t} | a_{0:t}, s_0) $ is given in~\eqref{eq:matrix_operation_s0}.

Numerical issues may arise in computing $\probs_\theta(y|s_0)$ and $\probs_\theta(y)$ because the probabilities can be close to 0 for a relatively long horizon $t$. We can avoid these numerical issues by taking the logarithm of both sides of equation \eqref{eq:joint_calcu}. 
The following properties further show the gradient calculation can be simplified.
\begin{proposition}
\label{prop:two_grads_equal}
Given $y= (o_{0:T}, a_{0:T})$, 
the gradient of $\log P_\theta(y|s_0)$ can be computed as
\begin{equation}
\label{eq:two_grads_equals}
\begin{aligned}
\nabla_\theta \log  \probs_\theta(y |s_0)& = \nabla_\theta \log  \probs_\theta(y )\\
&=  \sum_{t = 0}^{T} 
\nabla_\theta \log  \pi_\theta(a_{t}| o_{0:t-1}).
\end{aligned}
\end{equation}
which is invariant with respect to the initial state $s_0$.
\end{proposition}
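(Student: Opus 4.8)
The plan is to take logarithms of the two closed-form product expressions for the observation-sequence probabilities derived above — equation~\eqref{eq:joint_final} for $\probs_\theta(y)$ and equation~\eqref{eq:joint_calcu} for $\probs_\theta(y\mid s_0)$ — and then differentiate term by term, exploiting the fact that every factor coming from the \ac{hmm} dynamics is independent of the policy parameter $\theta$.

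First I would write $y = (o_{0:T}, a_{0:T})$ and expand the logarithm of~\eqref{eq:joint_calcu}:
\begin{equation*}
\begin{aligned}
\log \probs_\theta(y \mid s_0) = {} & \log \probs(o_{0:T} \mid a_{0:T}, s_0) - \log \probs(o_0 \mid a_0, s_0) \\
& {} + \sum_{t=0}^{T} \log \pi_\theta(a_t \mid o_{0:t-1}),
\end{aligned}
\end{equation*}
and similarly expand $\log\probs_\theta(y)$ from~\eqref{eq:joint_final}, with $\mu_0$ replacing $\mathbf{1}_{s_0}$ and the conditioning on $s_0$ dropped. The key observation is that the two leading ``dynamics'' terms in each expansion — computed via the observable-operator identities~\eqref{eq:matrix_operation} and~\eqref{eq:matrix_operation_s0} of Proposition~\ref{prop:observation-action-probability} — depend only on the fixed transition matrix $\matT$ and the fixed emission matrices $\matO^{a}$, hence not on $\theta$; this is precisely the independence already noted below equation~\eqref{eq:decompos}. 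Consequently, applying $\nabla_\theta$ annihilates those terms and leaves, in both cases, $\sum_{t=0}^{T}\nabla_\theta \log \pi_\theta(a_t\mid o_{0:t-1})$, which establishes the full chain of equalities at once and makes the $s_0$-invariance immediate.

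Since every ingredient is already assembled by the preceding propositions, I do not expect a genuine obstacle; the one point to state carefully is that perception actions enter the dynamics factors only through the fixed emission function $E$ (equivalently the matrices $\matO^a$) and never through $\pi_\theta$, so those factors are genuinely $\theta$-constant. A minor bookkeeping detail is the $t = 0$ summand, where $o_{0:-1} = \lambda$ so that $\pi_\theta(a_0 \mid o_{0:-1}) = \pi_\theta(\lambda)$; this is covered by the empty-string convention introduced below~\eqref{eq:joint_final}.
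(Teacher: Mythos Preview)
Your proposal is correct and follows essentially the same route as the paper: take logarithms of the product formulas~\eqref{eq:joint_final} and~\eqref{eq:joint_calcu}, observe that the leading dynamics factors are $\theta$-independent, and differentiate to leave only the policy-score sum. The paper's own proof is slightly terser but identical in substance.
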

\begin{proof}
First, let us take the logarithm on both sides in \eqref{eq:joint_calcu}, 
\begin{equation*}
\begin{aligned}
\log  \probs_\theta(y|s_0)& = \log  \probs(o_{0:t}| a_{0:t}, s_0) - \log  \probs(o_{0}| a_{0}, s_0)\\
&+ \sum_{t = 0}^T \log  \pi_\theta(a_{t}| o_{0:t-1}).
\end{aligned}
\end{equation*}
Then, we calculate the gradient on both sides,
\begin{equation}
\label{eq:nabla_log_p_theta_y_g_s0}
\nabla_\theta \log  \probs_\theta(y|s_0) = \sum_{t = 0}^T
\nabla_\theta \log \pi_\theta(a_{t}| o_{0:t-1}).
\end{equation}
which can be shown to equal $\nabla_\theta \log \probs_\theta(y)$ using  \eqref{eq:joint_final}.
\end{proof}

\begin{proposition}
\label{prop:grad_is_zero}
For any $y \in \calY_\theta$,
the gradient of the logarithm of the posterior probability $ \probs_\theta(s_0|y )$ with respect to $\theta$ is 0, \ie, 
$
\nabla_\theta \log  \probs_\theta(s_0|y ) = 0.$
Further, when $\probs_\theta(s_0|y ) \ne 0$, 
 $
\nabla_\theta \probs_\theta(s_0|y ) = 0.$ 
\end{proposition}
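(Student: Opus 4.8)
The plan is to apply Bayes' rule to the posterior and then invoke Proposition~\ref{prop:two_grads_equal} to cancel the resulting terms. First I would fix $y \in \calY_\theta$ and an initial state $s_0$ with $\probs_\theta(s_0 \mid y) > 0$, and write
\[
\probs_\theta(s_0 \mid y) = \frac{\probs_\theta(y \mid s_0)\,\mu_0(s_0)}{\probs_\theta(y)},
\]
so that, taking logarithms, $\log \probs_\theta(s_0 \mid y) = \log \probs_\theta(y \mid s_0) + \log \mu_0(s_0) - \log \probs_\theta(y)$. Since the prior $\mu_0(s_0)$ does not depend on $\theta$, differentiating with respect to $\theta$ gives
\[
\nabla_\theta \log \probs_\theta(s_0 \mid y) = \nabla_\theta \log \probs_\theta(y \mid s_0) - \nabla_\theta \log \probs_\theta(y).
\]

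Next I would apply Proposition~\ref{prop:two_grads_equal}, which states exactly that $\nabla_\theta \log \probs_\theta(y \mid s_0) = \nabla_\theta \log \probs_\theta(y)$ (both equal $\sum_{t=0}^{T} \nabla_\theta \log \pi_\theta(a_t \mid o_{0:t-1})$). Substituting this into the display above, the two terms cancel and we obtain $\nabla_\theta \log \probs_\theta(s_0 \mid y) = 0$. For the second claim, when $\probs_\theta(s_0 \mid y) \neq 0$ I would invoke the log-derivative identity $\nabla_\theta \probs_\theta(s_0 \mid y) = \probs_\theta(s_0 \mid y)\,\nabla_\theta \log \probs_\theta(s_0 \mid y)$, which is valid precisely because the posterior is nonzero; the right-hand side then equals $\probs_\theta(s_0 \mid y)\cdot 0 = 0$.

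I do not expect a genuine obstacle here: the only care required is bookkeeping about supports — keeping $y \in \calY_\theta$ so that $\probs_\theta(y) > 0$ and all conditional probabilities are well defined, and restricting to $s_0$ with $\probs_\theta(s_0\mid y)>0$ so the logarithm is meaningful. The conceptual content is entirely supplied by Proposition~\ref{prop:two_grads_equal}: the policy parameters enter $\probs_\theta(y\mid s_0)$ and $\probs_\theta(y)$ only through the identical factor $\prod_{t} \pi_\theta(a_t\mid o_{0:t-1})$, so this factor divides out in the posterior and the $\theta$-dependence of $\probs_\theta(s_0\mid y)$ vanishes.
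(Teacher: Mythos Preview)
Your proposal is correct and follows essentially the same approach as the paper: apply Bayes' rule, take logarithms, differentiate, and invoke Proposition~\ref{prop:two_grads_equal} to cancel $\nabla_\theta \log \probs_\theta(y\mid s_0)$ against $\nabla_\theta \log \probs_\theta(y)$, then use the log-derivative identity for the second claim. Your explicit attention to the support conditions ($y\in\calY_\theta$, $\probs_\theta(s_0\mid y)>0$) is, if anything, slightly more careful than the paper's own presentation.
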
 
\begin{proof}
First, using the Bayes' rule,
\[\probs_\theta(s_0|y) =   \frac{\probs_\theta(y|s_0)\mu_0(s_0)}{\probs_\theta(y)} \]
Taking the logarithm on both sides:
\begin{multline*}
\log \probs_\theta(s_0|y) = \log \probs_\theta(y|s_0)  
+\log \mu_0(s_0) - \log \probs_\theta(y),
\end{multline*}
and then taking the gradient on both sides with respect to $\theta$, 
\begin{multline*}
\nabla_\theta \log \probs_\theta(s_0|y) =\nabla_\theta \log \probs_\theta(y|s_0) \\+\nabla_\theta \log \mu_0(s_0) - \nabla_\theta \log \probs_\theta(y),
\end{multline*}
From Proposition~\ref{prop:two_grads_equal}, we derive $\nabla_\theta \log \probs_\theta(s_0|y) =0$ because $\nabla_\theta \log \probs_\theta(y|s_0)  - \nabla_\theta \log \probs_\theta(y) =0$ and $\log \mu_0(s_0)$ is a constant.  
Furthermore, because $\nabla_\theta \log \probs_\theta(s_0|y) = \frac{\nabla_\theta \probs_\theta(s_0|y)}{\probs_\theta(s_0|y)}$, when $\probs_\theta(s_0|y)\ne 0$ but $\nabla_\theta \log \probs_\theta(s_0|y)=0$, we can derive that $\nabla_\theta \probs_\theta(s_0|y)=0$.
\end{proof}

\begin{theorem}
\label{thm:entropy-grad}
The gradient of the conditional entropy w.r.t. the policy parameter $\theta$ is 
\begin{equation}
\label{eq:entropy-grad-expectation}
\begin{aligned}
&\nabla_\theta H (S_0|Y;\theta) =  \Expect_{y\sim M_\theta} \left[ H(S_0|Y= y;\theta)\nabla_\theta \log \probs_\theta(y )\right].
\end{aligned}
\end{equation}
\end{theorem}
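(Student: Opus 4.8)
The plan is to start from the last expression for $\nabla_\theta H(S_0|Y;\theta)$ already obtained in \eqref{eq:HMM_gradient_entropy}, namely
\begin{multline*}
\nabla_\theta H(S_0|Y;\theta) = -\sum_{y\in\calY_\theta}\probs_\theta(y)\sum_{s_0\in\mathcal{S}_0}\Big[\log\probs_\theta(s_0|y)\,\nabla_\theta\probs_\theta(s_0|y) \\ + \probs_\theta(s_0|y)\log\probs_\theta(s_0|y)\,\nabla_\theta\log\probs_\theta(y) + \nabla_\theta\probs_\theta(s_0|y)\Big],
\end{multline*}
and to show that the first and third terms inside the bracket vanish after summing over $s_0$, leaving only the middle term, which is exactly the claimed expectation.

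First I would dispatch the third term: since $\sum_{s_0\in\mathcal{S}_0}\probs_\theta(s_0|y)=1$ for every $y\in\calY_\theta$, differentiating this identity gives $\sum_{s_0\in\mathcal{S}_0}\nabla_\theta\probs_\theta(s_0|y)=0$, so that term contributes nothing. Next, for the first term I would invoke Proposition~\ref{prop:grad_is_zero}: for every $y\in\calY_\theta$ and every $s_0$ with $\probs_\theta(s_0|y)\neq 0$ we have $\nabla_\theta\probs_\theta(s_0|y)=0$, hence the summand $\log\probs_\theta(s_0|y)\,\nabla_\theta\probs_\theta(s_0|y)$ is zero; the remaining indices (those with $\probs_\theta(s_0|y)=0$) contribute nothing under the usual $0\log 0=0$ convention inherited from the $\nabla_\theta\probs_\theta(s_0,y)\log\probs_\theta(s_0|y)$ term from which it arose. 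Only the middle term survives.

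Finally I would note that $\nabla_\theta\log\probs_\theta(y)$ does not depend on $s_0$, pull it out of the inner sum, and recognize $-\sum_{s_0\in\mathcal{S}_0}\probs_\theta(s_0|y)\log\probs_\theta(s_0|y)$ as the posterior entropy $H(S_0|Y=y;\theta)$ conditioned on the realized observation $y$. Substituting and reading $\sum_{y\in\calY_\theta}\probs_\theta(y)(\cdot)$ as $\Expect_{y\sim M_\theta}[\cdot]$ then yields \eqref{eq:entropy-grad-expectation}.

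The main obstacle is rigor rather than difficulty: one has to justify discarding the $\probs_\theta(s_0|y)=0$ summands, treat the support $\calY_\theta$ as locally constant in $\theta$ so that the finite sum over $y$ commutes with $\nabla_\theta$, and confirm that $\nabla_\theta\log\probs_\theta(y)$ is well-defined on $\calY_\theta$ — the latter following from Proposition~\ref{prop:two_grads_equal} together with the assumed smoothness of $\theta\mapsto\pi_\theta$. Beyond that, the argument is just the rearrangement that Propositions~\ref{prop:two_grads_equal} and~\ref{prop:grad_is_zero} were set up to enable.
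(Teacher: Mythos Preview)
Your proposal is correct and follows essentially the same route as the paper: both start from \eqref{eq:HMM_gradient_entropy}, invoke Proposition~\ref{prop:grad_is_zero} to kill the $\nabla_\theta\probs_\theta(s_0|y)$ contributions on the set where $\probs_\theta(s_0|y)>0$, discard the $\probs_\theta(s_0|y)=0$ summands via the $0\log 0$ convention, and then read off the expectation. The only cosmetic difference is that you dispose of the bare third term via the identity $\sum_{s_0}\nabla_\theta\probs_\theta(s_0|y)=\nabla_\theta 1=0$, whereas the paper eliminates it together with the first term by applying Proposition~\ref{prop:grad_is_zero} pointwise on $\mathcal{S}_{0,y}=\{s_0:\probs_\theta(s_0|y)>0\}$.
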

\begin{proof}
The conditional entropy is given by:
\[
H(S_0 | Y; \theta) = -\sum_{y \in \mathcal{O}^T} \probs_\theta(y) \sum_{s_0 \in \mathcal{S}_0} \probs_\theta(s_0 | y) \log \probs_\theta(s_0 | y).
\]
When computing the gradient using \eqref{eq:HMM_gradient_entropy}, the terms in the summation corresponding to $ \probs_\theta(s_0 | y) = 0 $ vanish. Let $\mathcal{S}_{0,y} = \{s_0 \in \mathcal{S} \mid \probs_\theta(s_0| y) > 0\}$ be the set of initial states from which an observation $y$ is possible. Then,
\begin{equation*}
\begin{aligned}
 &\nabla_\theta H(S_0|Y;\theta) \\
 = & - \sum_{y \in \calY_\theta} \probs_\theta (y) \sum_{s_0 \in \mathcal{S}_{0,y}} \Big[ \log \probs_\theta(s_0 | y) \nabla_\theta \probs_\theta(s_0| y) \\
& + \probs_\theta(s_0| y) \log \probs_\theta(s_0 | y) \nabla_\theta \log \probs_\theta(y) + \nabla_\theta \probs_\theta(s_0 | y) \Big] 
\end{aligned}
\end{equation*}
From Proposition~\ref{prop:grad_is_zero}, the gradient $\nabla_\theta \probs_\theta(s_0| y) = 0$ when $ \probs_\theta(s_0| y) \neq 0$. Thus, we have
\begin{equation}
\label{eq:entropy-grad}
\begin{aligned}
& \nabla_\theta H (S_0|Y;\theta) =  - \sum_{y \in \calY_\theta} \probs_\theta (y) \\
& \sum_{s_0 \in \mathcal{S}_{0,y}} \Big[ \probs_\theta(s_0| y ) \log \probs_\theta(s_0 | y ) \nabla_\theta \log \probs_\theta(y ) \Big],\\
& = \Expect_{y\sim M_\theta} \left[ H(S_0|Y= y;\theta)\nabla_\theta \log \probs_\theta(y )\right]
\end{aligned}
\end{equation}
where the expectation is taken with respect to the stochastic process of observations induced by the perception policy $\pi_\theta$. Note that $\nabla_\theta \log \probs_\theta(y )$ can be computed using \eqref{eq:two_grads_equals}. 
\end{proof}    


Next, we show the convergence of a gradient-descent method under a common assumption of the policy space.

\begin{assumption}
\label{assume:bounded-policy-gradient}
For any time step $t \in [0,T]$, for any $(o_{0:t-1},a)\in O^{t}\times A$,   both $\| \nabla_\theta \log \pi_\theta(a | o_{0:t-1}) \|$ and $\|\nabla_\theta^2 \log \pi_\theta(a_{t}| o_{0:t-1}) \|$ are bounded.
\end{assumption}
\begin{theorem}
Under Assumption~\ref{assume:bounded-policy-gradient}, 
the   entropy $H(S_0|Y; \theta)$ is Lipschitz-continuous and Lipschitz-smooth in $\theta$.
\end{theorem}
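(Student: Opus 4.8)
The plan is to prove both claims by exhibiting uniform bounds on $\|\nabla_\theta H(S_0|Y;\theta)\|$ and $\|\nabla_\theta^2 H(S_0|Y;\theta)\|$ over the whole parameter set; since the policy-gradient literature customarily takes $\Theta$ convex (e.g.\ $\Theta=\reals^d$ under a softmax parametrization), a uniform bound on the first derivative yields Lipschitz continuity and a uniform bound on the Hessian yields Lipschitz smoothness (i.e.\ a Lipschitz gradient), both through the mean value inequality.

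For Lipschitz continuity I would start from the closed form in Theorem~\ref{thm:entropy-grad}, $\nabla_\theta H(S_0|Y;\theta) = \Expect_{y\sim M_\theta}[H(S_0|Y=y;\theta)\,\nabla_\theta\log\probs_\theta(y)]$, and pull the norm inside the expectation by Jensen's inequality. Two elementary bounds then finish the argument: first, the per-observation conditional entropy $H(S_0|Y=y;\theta)=-\sum_{s_0}\probs_\theta(s_0|y)\log\probs_\theta(s_0|y)$ is the entropy of a distribution over at most $N$ outcomes, hence is at most $\log N$; second, by Proposition~\ref{prop:two_grads_equal} we have $\nabla_\theta\log\probs_\theta(y)=\sum_{t=0}^T\nabla_\theta\log\pi_\theta(a_t|o_{0:t-1})$, so the triangle inequality together with Assumption~\ref{assume:bounded-policy-gradient} gives $\|\nabla_\theta\log\probs_\theta(y)\|\le (T+1)B_1$, where $B_1$ uniformly bounds $\|\nabla_\theta\log\pi_\theta(a|o_{0:t-1})\|$. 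Hence $\|\nabla_\theta H(S_0|Y;\theta)\|\le (T+1)B_1\log N$ for every $\theta$.

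For Lipschitz smoothness I would differentiate once more. Writing the gradient as $\nabla_\theta H=\sum_{y\in\calY_\theta}H(S_0|Y=y;\theta)\,\nabla_\theta\probs_\theta(y)$ and using the product rule gives one term carrying $\nabla_\theta H(S_0|Y=y;\theta)$ and one carrying $\nabla_\theta^2\probs_\theta(y)$. The first term vanishes: Proposition~\ref{prop:grad_is_zero} gives $\nabla_\theta\probs_\theta(s_0|y)=0$ on $\calS_{0,y}$, so $\nabla_\theta H(S_0|Y=y;\theta)=0$ — the posterior over $S_0$, and thus its entropy, is pinned down by the HMM dynamics and does not move with $\theta$. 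For the second term I would apply the log-derivative identity twice, $\nabla_\theta^2\probs_\theta(y)=\probs_\theta(y)\big(\nabla_\theta\log\probs_\theta(y)\,\nabla_\theta\log\probs_\theta(y)^\top+\nabla_\theta^2\log\probs_\theta(y)\big)$, with $\nabla_\theta^2\log\probs_\theta(y)=\sum_{t=0}^T\nabla_\theta^2\log\pi_\theta(a_t|o_{0:t-1})$ obtained by differentiating the identity of Proposition~\ref{prop:two_grads_equal}. This produces $\nabla_\theta^2 H=\Expect_{y\sim M_\theta}\big[H(S_0|Y=y;\theta)\big(\nabla_\theta\log\probs_\theta(y)\,\nabla_\theta\log\probs_\theta(y)^\top+\nabla_\theta^2\log\probs_\theta(y)\big)\big]$, and bounding the three factors by $\log N$, $(T+1)^2B_1^2$ (the spectral norm of the outer product is the squared vector norm), and $(T+1)B_2$ (with $B_2$ the uniform bound on $\|\nabla_\theta^2\log\pi_\theta\|$ from Assumption~\ref{assume:bounded-policy-gradient}) gives $\|\nabla_\theta^2 H\|\le \log N\big((T+1)^2B_1^2+(T+1)B_2\big)$ uniformly in $\theta$.

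The main obstacle I expect is not the bounding but justifying that these term-by-term differentiations are legitimate everywhere on $\Theta$ — i.e.\ that the support set $\calY_\theta$, and the per-$y$ sets $\calS_{0,y}$, do not depend on $\theta$, so that no summand drops in or out and the $x\log x$-type terms stay in their smooth regime. I would handle this by noting that Assumption~\ref{assume:bounded-policy-gradient}, requiring $\|\nabla_\theta\log\pi_\theta(a|o_{0:t-1})\|$ to be finite for \emph{every} $(o_{0:t-1},a)$, forces $\pi_\theta(a|o_{0:t-1})>0$ for all actions (as in a softmax parametrization), so by \eqref{eq:joint_final} $\probs_\theta(y)>0$ precisely when the HMM can generate $o_{0:t}$ under $a_{0:t}$, a condition independent of $\theta$, and likewise for $\calS_{0,y}$. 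With the supports frozen, $H(S_0|Y;\theta)$ is a finite sum of products of smooth, bounded functions of $\theta$, and the computations above are rigorous; restricting the inner sum to $\calS_{0,y}$, exactly as in the proof of Theorem~\ref{thm:entropy-grad}, disposes of the $\probs_\theta(s_0|y)=0$ summands.
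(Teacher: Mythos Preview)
Your proposal is correct and follows essentially the same route as the paper: bound $\|\nabla_\theta H\|$ and $\|\nabla_\theta^2 H\|$ uniformly using the expectation form of Theorem~\ref{thm:entropy-grad}, the score decomposition of Proposition~\ref{prop:two_grads_equal}, the vanishing posterior gradient of Proposition~\ref{prop:grad_is_zero}, and the bounds of Assumption~\ref{assume:bounded-policy-gradient}. Two places where you are more careful than the paper are worth flagging. First, your Hessian formula
\[
\nabla_\theta^2 H \;=\; \Expect_{y\sim M_\theta}\!\left[H(S_0|Y=y;\theta)\Big(\nabla_\theta\log\probs_\theta(y)\,\nabla_\theta\log\probs_\theta(y)^\top+\nabla_\theta^2\log\probs_\theta(y)\Big)\right]
\]
retains the outer-product term coming from differentiating the weight $\probs_\theta(y)$ inside the expectation; the paper's displayed Hessian drops this term, so your bound $\log N\big((T{+}1)^2B_1^2+(T{+}1)B_2\big)$ is the complete one. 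Second, you explicitly justify that the support sets $\calY_\theta$ and $\calS_{0,y}$ do not vary with $\theta$ (Assumption~\ref{assume:bounded-policy-gradient} forces $\pi_\theta(a|\cdot)>0$, so by \eqref{eq:joint_final} positivity of $\probs_\theta(y)$ depends only on the HMM), which licenses the termwise differentiation; the paper leaves this implicit. These are refinements rather than a different strategy.
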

\begin{proof}
We prove the theorem by showing that the gradient $\nabla_\theta H(S_0|Y; \theta)$ and Hessian $\nabla_\theta^2 H(S_0|Y; \theta)$ are both bounded. 
Referring to \eqref{eq:entropy-grad},  
by Jensen's inequality, we obtain
\begin{equation}
\label{eq:jensen}
H(S_0| Y = y; \theta) \leq \log |\mathcal{S}_0|. 
\end{equation}
The results of Proposition~\ref{prop:two_grads_equal} show that
\begin{equation}
\label{eq:bounded-gradient}
\nabla_\theta \log  \probs_\theta(y) = \sum_{t = 0}^T
\nabla_\theta \log \pi_\theta(a_{t}| o_{0:t-1}).
\end{equation}
which is bounded given Assumption~\ref{assume:bounded-policy-gradient} and the triangle inequality.
Due to the boundedness of $H(S_0| Y = y; \theta)$ and $ \nabla_\theta \log \probs_\theta(y)$, the gradient $ \nabla_\theta H(S_0| Y = y; \theta)$ is also bounded because $\Expect_{y\sim M_\theta} \left[ H(S_0|Y= y;\theta)\nabla_\theta \log \probs_\theta(y )\right] \le \Expect_{y\sim M_\theta} ( \beta  ) =  \beta $ where $\beta = \max_{y \in \calY_\theta} \norm{ H(S_0|Y= y;\theta)\nabla \log \probs_\theta(y)}_\infty$.
Next, consider the Hessian
\begin{equation}
\begin{aligned}
\nabla_\theta^2 H(S_0|Y; \theta) = &\Expect_{y\sim M_\theta} \left[ \nabla_\theta H(S_0|Y= y;\theta) \nabla_\theta \log \probs_\theta(y) \right.\\
 & \left.+  H(S_0|Y= y;\theta) \nabla_\theta^2 \log \probs_\theta(y)\right]\\
= &\Expect_{y\sim M_\theta}\left[H(S_0|Y= y;\theta) \nabla_\theta^2 \log \probs_\theta(y)\right]
\end{aligned}
\end{equation}
where the last equality is because 
  $\nabla_\theta \probs_\theta(s_0|y) = \nabla_\theta  \log \probs_\theta(s_0|y) = 0$ and thus  $\nabla_\theta H(S_0|Y= y;\theta)  =0 $ by Proposition~\ref{prop:grad_is_zero}. Further, when   $\|\nabla_\theta^2 \log \pi_\theta(a_{t}| o_{0:t-1}) \|$ is bounded for all $t \in [0,T]$, $o_{0:t-1}\in O^t$, and $a\in A$,  
\begin{equation}
\label{eq:bounded-hessian}
\nabla_\theta^2 \log  \probs_\theta(y) = \sum_{t = 0}^T
\nabla_\theta^2 \log \pi_\theta(a_{t}| o_{0:t-1})
\end{equation} is bounded. Thus, $ \nabla_\theta^2 \log \probs_\theta(y)$ is bounded by the triangle inequality.
Combining \eqref{eq:jensen} and \eqref{eq:bounded-hessian}, with similar reasoning for the boundedness of the gradient, it holds that $\nabla_\theta^2 H(S_0|Y; \theta)$
is bounded.
\end{proof}


  To obtain the locally optimal policy parameter $\theta$, we initialize a policy parameter $\theta_0$ and carry out the gradient descent.  At each iteration $\tau\ge 1$,  
  \begin{equation}
\theta_{\tau + 1} = \theta_{\tau} - \eta  \nabla_\theta H (S_0|Y;\theta_{\tau})
 \end{equation}
 where $\eta$ is the step size. 
When using the \emph{gradient descent} algorithm to compute the optimal $\theta$, it may be computationally expensive to compute $ \nabla_\theta H(S_0|Y;\theta)$ by enumerating all possible observations $y$. A sample approximation is employed such that at each iteration, we collect $M$ sequences of observations $\{y_1,\ldots, y_M\}$, and compute  
\begin{equation*}
\begin{aligned}
& \nabla_\theta H (S_0|Y;\theta) \approx \hat{\nabla}_\theta H (S_0|Y;\theta) \\  = & - \frac{1}{M} \sum_{k = 1}^M \sum_{s_0 \in \mathcal{S}_0} \Big[ \probs_\theta(s_0| y_k) \log \probs_\theta(s_0 | y_k) \nabla_\theta \log \probs_\theta(y_k) \Big].
\end{aligned}
\end{equation*}


\section{EXPERIMENTS} 
Consider a stochastic grid world environment (Fig.~\ref{fig:environment}) with three types of robots, each starting from different positions to reach specific goals (flags). The blue robot (type 1) starts at $(0,3)$, the red (type 2) at $(3,0)$, and the green (type 3) at $(5,2)$.
In line with standard grid world dynamics, each robot has a 20\% chance of moving to one of the two nearest cells instead of the intended direction. Robots hitting walls or boundaries remain in place. Their policies are computed to maximize the probability of reaching their goals from their starting positions.
\begin{figure}[htp!]
\centering
\includegraphics[width=0.45\textwidth]{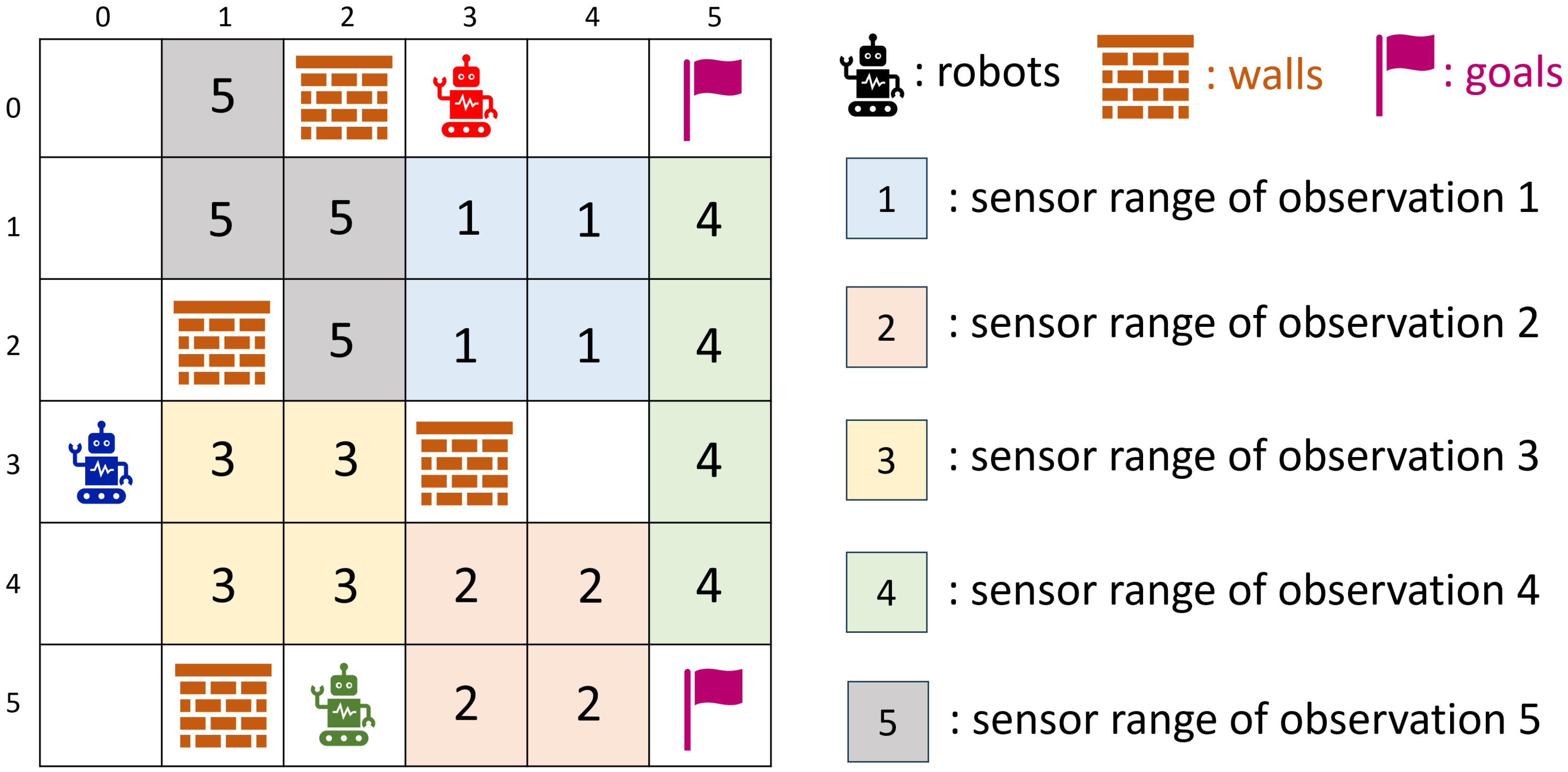}
\caption{A stochastic grid world monitored by a set of sensors.}
\label{fig:environment}
\end{figure}

\begin{figure*}[t!]
\centering
\begin{subfigure}{.24\textwidth}
  \centering
  \includegraphics[width=\linewidth]{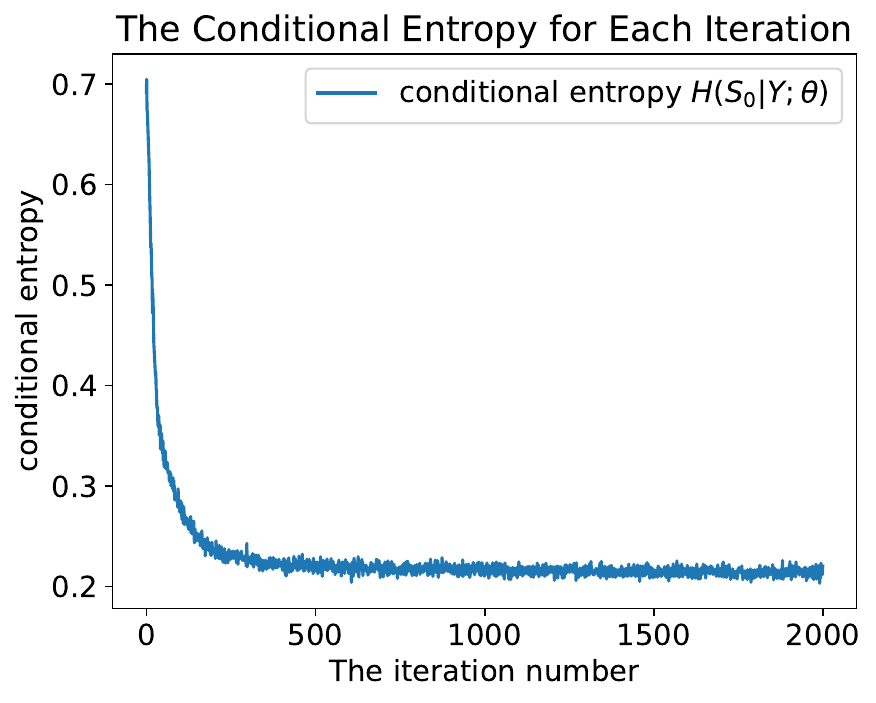}
  \caption{}
  \label{fig:opt_results}
\end{subfigure}%
\begin{subfigure}{.24\textwidth}
  \centering
  \includegraphics[width=\linewidth]{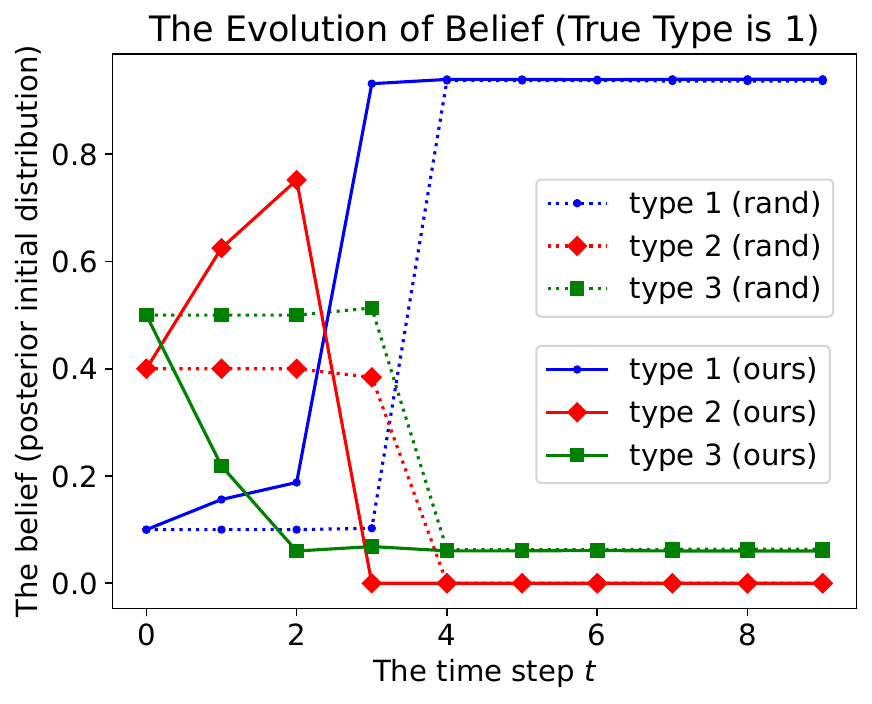}
  \caption{}
  \label{fig:type_1}
\end{subfigure}%
\begin{subfigure}{.24\textwidth}
  \centering
  \includegraphics[width=\linewidth]{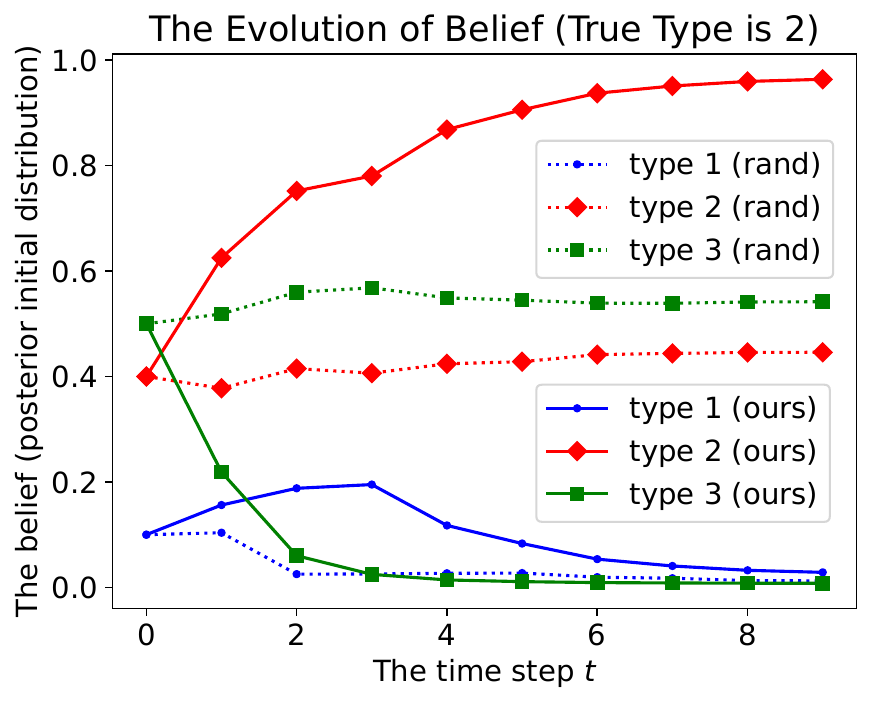}
  \caption{}
  \label{fig:type_2}
\end{subfigure}
\begin{subfigure}{.24\textwidth}
  \centering
  \includegraphics[width=\linewidth]{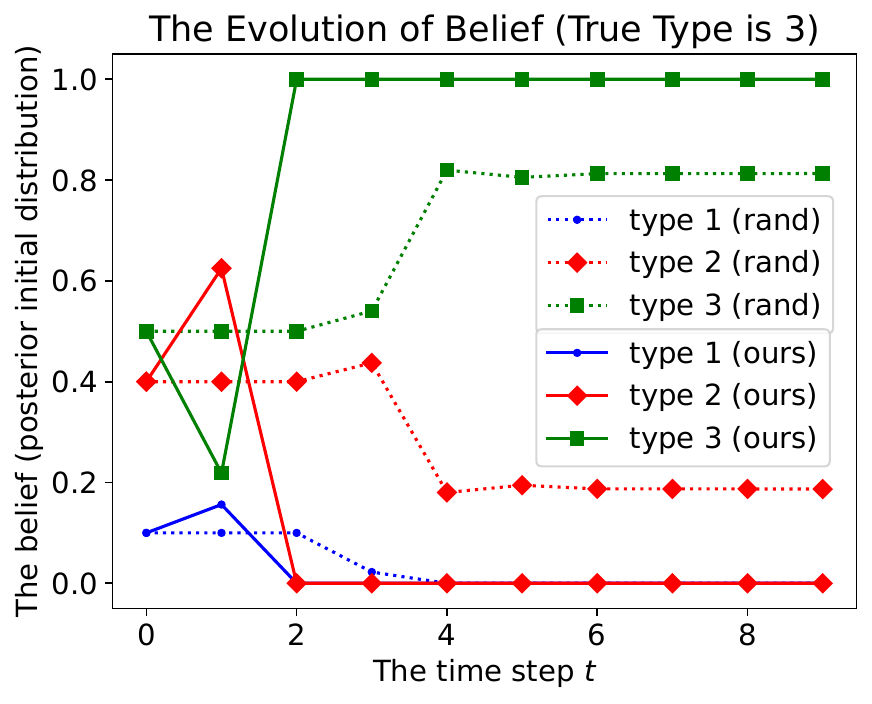}
  \caption{}
  \label{fig:type_3}
\end{subfigure}
\caption{The convergence results of policy gradients algorithm and beliefs evolution for different true types of robots.   }
\label{fig:baseline}
\vspace{-4ex}
\end{figure*}

The environment has five sensors, each with a distinct range. The observer can query only one sensor at a time. For sensor $i \in \{1,\ldots, 5\}$, if a robot is within its range and the observer queries it, the observer receives observation $i$ with 90\% probability and a null observation (``n") with 10\% probability (false negative). Otherwise, the observer  also receives a null observation.

We consider finite-state perception policy (Def.~\ref{def:perception_policy}): Given an integer $K\ge 0$, the state set $Q=\{O^{\le K}\}$ are the set of    observations with length $\le K$.  For each $q\in Q$, $\delta(q, o) = q'$ is defined such that $q' =  \mathsf{suffix}^{= K}(q\cdot o)$ \footnote{$\mathsf{suffix}^{= K}(w)$ is the last $K$ symbols of string $w$ if $|w|\ge K$ or $w$ itself otherwise.} is the last  (up to) $ K$  observations after appending the observation $o$ to  $q$. The   probabilistic output function is parameterized as, 
$      \psi_\theta(a|q) =  \frac{\exp(\theta_{q,a})}{\sum_{a' \in \mathcal{A}}\exp(\theta_{q, a'})},
$ where  $\theta \in \reals^{
|Q \times \mathcal{A}|}$ is the policy parameter vector. Given observations $o_{0:t}$, the policy $\pi_\theta(a| o_{0:t}) =  \psi_\theta( a| \delta(q_0, o_{0:t}))$.
The softmax policy satisfies Assumption~\ref{assume:bounded-policy-gradient} and is differentiable. 
In the experiments, we set the length of memory $K=2$.

The initial distribution $\mu_0$ is $0.1, 0.4, 0.5$ for type 1, type 2, and type 3, respectively. Figure~\ref{fig:opt_results} illustrates the convergence of the policy gradient \footnote{We sample $M = 2000$ trajectories and set the horizon $T = 10$ for each iteration. The fixed step size of the gradient descent algorithm is set to be $0.5$. We run $N = 2000$ iterations on the 12th Gen Intel(R) Core(TM) i7-12700, the average time consumed for one iteration is $6.7$ seconds.}.
When the algorithm converges, the conditional entropy $ H(S_0|Y;\theta) $ 
approaches approximately $0.22$. This indicates that the observations provide substantial information about the robot's type on average.



 
To provide an intuitive understanding, we use the computed perception policy $\theta^\ast$, called the ``min-entropy" policy, to evaluate the posterior belief $P_{\theta^\ast}(S_0|o_{0:i})$ of the robot's type given observations $o_{0:i}$ for $0 \le i \le T$. For comparison, we also generated random policies by randomly selecting policy parameter $\theta$ and chose the one with the lowest conditional entropy, referred to as the ``random policy." We then sampled an initial state/type and used both policies to collect observations. Figures \ref{fig:type_1}, \ref{fig:type_2}, and \ref{fig:type_3} show how the agent's belief evolves for different types under the random and min-entropy policies.
For each robot type, the min-entropy policy allows for accurate type identification by time $T=9$, with probabilities of $0.94$ for type 1, $0.96$ for type 2, and $1.00$ for type 3. In contrast, the random policy results in lower probabilities: $0.93$ for type 1, $0.45$ for type 2, and $0.81$ for type 3. Both policies perform well for type 1, but 
 the min-entropy policy significantly outperforms the random policy for identifying type 2 and type 3 robots.



\section{CONCLUSION AND FUTURE WORK}
In this paper, we introduce a conditional entropy measure to quantify uncertainty and formulate a problem of minimizing the uncertainty of initial states in an \ac{hmm}. To solve this optimization problem, we develop a gradient descent algorithm and derive the gradient of conditional entropy using observable operators. Further, we
prove that the conditional entropy is Lipschitz-continuous and $L$-smooth with respect to the policy parameters under certain assumptions for the policy search space.  
An interesting direction for future research would be to explore active perception under varying assumptions for the perception agent. For example, the agent with imprecise knowledge about the model dynamics. It is also interesting to consider a general POMDP formulation where the agent can change both the transition dynamics and emission function of the partially observable systems.

\bibliographystyle{IEEEtranS}
\bibliography{jiefu_refs, chongyang_refs}

\begin{thebibliography}{10}
\providecommand{\url}[1]{#1}
\csname url@samestyle\endcsname
\providecommand{\newblock}{\relax}
\providecommand{\bibinfo}[2]{#2}
\providecommand{\BIBentrySTDinterwordspacing}{\spaceskip=0pt\relax}
\providecommand{\BIBentryALTinterwordstretchfactor}{4}
\providecommand{\BIBentryALTinterwordspacing}{\spaceskip=\fontdimen2\font plus
\BIBentryALTinterwordstretchfactor\fontdimen3\font minus \fontdimen4\font\relax}
\providecommand{\BIBforeignlanguage}[2]{{%
\expandafter\ifx\csname l@#1\endcsname\relax
\typeout{** WARNING: IEEEtranS.bst: No hyphenation pattern has been}%
\typeout{** loaded for the language `#1'. Using the pattern for}%
\typeout{** the default language instead.}%
\else
\language=\csname l@#1\endcsname
\fi
#2}}
\providecommand{\BIBdecl}{\relax}
\BIBdecl

\bibitem{agarwal2021theory}
A.~Agarwal, S.~M. Kakade, J.~D. Lee, and G.~Mahajan, ``On the theory of policy gradient methods: Optimality, approximation, and distribution shift,'' \emph{Journal of Machine Learning Research}, vol.~22, no.~98, pp. 1--76, 2021.

\bibitem{andreopoulos2009theory}
A.~Andreopoulos and J.~K. Tsotsos, ``A theory of active object localization,'' in \emph{2009 IEEE 12th International Conference on Computer Vision}.\hskip 1em plus 0.5em minus 0.4em\relax IEEE, 2009, pp. 903--910.

\bibitem{Araya2010Neurips}
M.~Araya, O.~Buffet, V.~Thomas, and F.~Charpillet, ``A {POMDP} extension with belief-dependent rewards,'' in \emph{Advances in Neural Information Processing Systems}, J.~Lafferty, C.~Williams, J.~Shawe-Taylor, R.~Zemel, and A.~Culotta, Eds., vol.~23.\hskip 1em plus 0.5em minus 0.4em\relax Curran Associates, Inc., 2010.

\bibitem{bajcsyRevisitingActivePerception2018}
R.~Bajcsy, Y.~Aloimonos, and J.~K. Tsotsos, ``\BIBforeignlanguage{en}{Revisiting active perception},'' \emph{\BIBforeignlanguage{en}{Autonomous Robots}}, vol.~42, no.~2, pp. 177--196, Feb. 2018.

\bibitem{CASAO2024103876}
S.~Casao, Álvaro Serra-Gómez, A.~C. Murillo, W.~Böhmer, J.~Alonso-Mora, and E.~Montijano, ``Distributed multi-target tracking and active perception with mobile camera networks,'' \emph{Computer Vision and Image Understanding}, vol. 238, p. 103876, 2024.

\bibitem{6161683}
M.~Dunbabin and L.~Marques, ``Robots for environmental monitoring: Significant advancements and applications,'' \emph{IEEE Robotics \& Automation Magazine}, vol.~19, no.~1, pp. 24--39, 2012.

\bibitem{egorovTargetSurveillanceAdversarial2016}
M.~Egorov, M.~J. Kochenderfer, and J.~J. Uudmae, ``Target surveillance in adversarial environments using {POMDPs},'' in \emph{Proceedings of the {Thirtieth} {AAAI} {Conference} on {Artificial} {Intelligence}}.\hskip 1em plus 0.5em minus 0.4em\relax AAAI Press, 2016, pp. 2473--2479.

\bibitem{jaeger2000observableoperator}
H.~Jaeger, ``{Observable Operator Models for Discrete Stochastic Time Series},'' \emph{Neural Computation}, vol.~12, no.~6, pp. 1371--1398, 06 2000.

\bibitem{jaegerObservableOperatorProcesses1997}
H.~Jaeger and S.~Augustin, ``\BIBforeignlanguage{en}{Observable {Operator} {Processes} and {Conditioned} {Continuation} {Representations}},'' \emph{\BIBforeignlanguage{en}{Arbeitspapiere der GMD}}, no. 1043, Jan. 1997.

\bibitem{lafortuneHistoryDiagnosabilityOpacity2018a}
\BIBentryALTinterwordspacing
S.~Lafortune, F.~Lin, and C.~N. Hadjicostis, ``On the history of diagnosability and opacity in discrete event systems,'' \emph{Annual Reviews in Control}, vol.~45, pp. 257--266, Jan. 2018. [Online]. Available: \url{https://www.sciencedirect.com/science/article/pii/S136757881830004X}
\BIBentrySTDinterwordspacing

\bibitem{lauriMultiagentActivePerception2020}
M.~Lauri and F.~Oliehoek, ``Multi-agent active perception with prediction rewards,'' in \emph{Advances in {Neural} {Information} {Processing} {Systems}}, vol.~33.\hskip 1em plus 0.5em minus 0.4em\relax Curran Associates, Inc., 2020, pp. 13\,651--13\,661.

\bibitem{molloy2023smoother}
T.~Molloy and G.~Nair, ``Smoother entropy for active state trajectory estimation and obfuscation in pomdps,'' \emph{IEEE Transactions on Automatic Control}, vol.~PP, pp. 1--16, 06 2023.

\bibitem{SavasEntropy2022}
Y.~Savas, M.~Hibbard, B.~Wu, T.~Tanaka, and U.~Topcu, ``Entropy maximization for partially observable markov decision processes,'' \emph{IEEE Transactions on Automatic Control}, vol.~67, no.~12, pp. 6948--6955, 2022.

\bibitem{Shendeep2019}
M.~Shen and J.~P. How, ``Active perception in adversarial scenarios using maximum entropy deep reinforcement learning,'' in \emph{2019 International Conference on Robotics and Automation (ICRA)}.\hskip 1em plus 0.5em minus 0.4em\relax IEEE Press, 2019, p. 3384–3390.

\bibitem{shi2024information}
C.~Shi, Y.~Bu, and J.~Fu, ``Information-theoretic opacity-enforcement in markov decision processes,'' in \emph{Proceedings of the Thirty-Third International Joint Conference on Artificial Intelligence, {IJCAI-24}}, K.~Larson, Ed.\hskip 1em plus 0.5em minus 0.4em\relax International Joint Conferences on Artificial Intelligence Organization, 8 2024, pp. 6779--6787, main Track.

\bibitem{zhou2011multirobot}
K.~Zhou and S.~I. Roumeliotis, ``Multirobot active target tracking with combinations of relative observations,'' \emph{IEEE Transactions on Robotics}, vol.~27, no.~4, pp. 678--695, 2011.

\end{thebibliography}

\end{document}